\newtheorem{definition}{Definition}
\newtheorem{theorem}{Theorem}
\newtheorem{example}{Example}
\newcommand{\Sa}{\Sigma}
\newcommand{\Sk}{\Sigma^{*}}
\newcommand{\e}{\varepsilon}
\begin{document}
   \RRNo{7825}
  
  \makeRR

This report is structured as follows. Section~\ref{sec:set} describes
the notion of Conflict-free Replicated Data Types (CRDT). We describe
more precisely the different solutions to build a set CRDT, since all
our tree CRDTs are based on sets. Section~\ref{sec:graph} constructs
several tree CRDTs using the graph theory definition of a tree : a set
of node and a set of oriented edge with some particular properties. To
manage these sets we use set CRDTs; and to ensure the tree properties
in case of concurrent modifications, we build two layers of correction
algorithms. The first layer ensures that the graph is rooted while the
second ensures uniqueness of paths. For each layer, we propose
different and independent policies. Section~\ref{sec:word} also
constructs several tree CRDTs but using word theory to define the set
of paths in a tree. Since such paths are unique, this kind of tree
CRDT is constructed using a set CRDT and a connection layer.
Section~\ref{sec:order} proposes to define ordered tree CRDT by adding
element positioning in tree CRDTs described in previous
sections. These positions come from well-known sequential editing
CRDTs. To make positions compatible with any tree CRDT construct, we
define a new sequential editing CRDT called WOOTR. Finally, we
conclude in Section~\ref{sec:conclusion}.

\section{CRDT  definition}
\label{sec:set}

Replication is a key feature in any large distributed system. When the
replicated data are mutable, the consistency between the replicas must
be ensured. This consistency can be {\em strong} or {\em eventual}. In
the {\em strong consistency} model (aka atomic or linear consistency),
a mutation seems to occurs instantaneously on all replicas. However,
the CAP theorem~\cite{gilbert02brewer} states that is impossible to
achieve simultaneously strong consistency (C), availability (A) and to
tolerate network partition (P).

In the {\em eventual consistency} model, the replicas are allowed to
diverge, but eventually reach the same value if no more mutations
occur. A mechanism to obtain eventual consistency is to design {\em
  Conflict-free Replicated Data Types
  (CRDT)}~\cite{shapiro11conflictfree}.  CRDT can be state-based or
operation-based. In state-based CRDTs -- aka Convergent Replicated
Data Type (CvRDT) -- the data are computed by merging the state of the
local replica with the state of another replica. Eventual consistency
is achieved if the merge relation is a monotonic semilattice. In the
operation-based CRDTs -- aka Commutative Replicated Data Type (CmRDT)
-- the data is computed by executing remote operations on the local
replica. Eventual consistency is achieved if operations are delivered
in certain order and if the execution of the non-ordered operations
commutes. For instance, using causal order, the execution of
concurrent (in Lamport's definition~\cite{lamport78time}) operations
must commutes.

\subsection{Set}

In this section we show how is defined a set CRDT.  We define a data
type by a set of update operations and their pre-condition and
post-conditions. The precondition is local (i.e. it must only be valid
on the replica that generates the update) while the postconditions are
global (i.e. it must be valid immediately after the update).

Consider the operations $add(a)$ and $rmv(a)$ for a set data type. In
a sequential execution, the ``traditional'' definition of the pre- and
post-conditions are
 \begin{itemize}
\item $pre(add(a), S) \equiv a \notin S$
\item $post(add(a), S) \equiv a \in S$
\item $pre(rmv(a), S) \equiv a \in S$   
\item $post(rmv(a), S) \equiv a \notin S$   
\end{itemize}

In case of concurrent updates, the post-conditions $add(a)||rmv(a)$
conflict. Indeed for a CvRDT, we cannot a have a merge that ensure
the both post-conditions. For a CmRDT, the execution of the two
updates in two different orders either leads to two different set
(Figure~\ref{fig:set}), either not ensures the post-conditions.

\begin{figure}[H] 
  \centering
  \includegraphics[width=8cm]{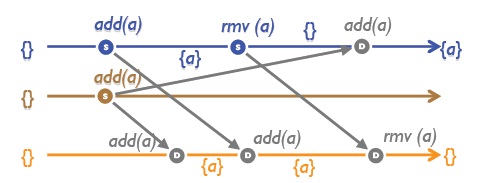}
  \caption{Set with concurrent addition and remove~\cite{shapiro11comprehensive}}
  \label{fig:set}
\end{figure} 

Thus, a set CRDT has different global post-conditions in order to take
into account the concurrent updates while ensuring eventual
consistency. Each CRDT has a {\em payload} which is an internal data
structure not exposed to the client application, and {\em lookup}, a
function on the payload that returns a set to the client
application. For a set CRDT, the pre-conditions must be locally true
on the {\em lookup} of the set.

Different set CRDTs~\cite{shapiro11comprehensive} are the G-Set,
2P-Set, LWW-Set, PN-Set, and OR-Set. They are described below.

\subsection{G-Set}

In a Grow Only Set (G-Set), elements can only be added and not
removed. The CvRDT merge mechanism is a classical set union.

\subsection{2P-Set}

In a Two Phases Set (2P-Set), an element may be added and removed, but
never added again thereafter. The CvRDT 2P-Set (known as
U-Set~\cite{wuu84efficient}) payload consists in two add-only set $A$
and $R$. Adding an element adds it to $A$ and deleting en elements add
it to $R$. The {\em lookup} returns the difference $A \setminus
R$. The set $R$ is often called the tombstones set. 

The CmRDT 2P-Set does not require tombstone but causal delivery; thus,
a remove is always received after the addition of the element.

\subsection{LWW-Set}

In a Last Writer Wins Set (LWW-Set), each element is associated to a
timestamp and a visibility flag.  A local operation adds the element
if not present and updates the timestamp and the visibility flag (true
for $add$, false for $rmv$). The CvRDT merge mechanism makes the union
of all elements and for each element the pair (timestamp, flag) of the
maximum timestamp.

In the CmRDT, the execution of a remote operation updates the element
only if timestamp of the operation is higher than the timestamp
associated to the element. The both CRDTs requires tombstones and the
{\em lookup} returns elements which have a true visibility flag.

\begin{figure}[H] 
\centering
\includegraphics[width=8cm]{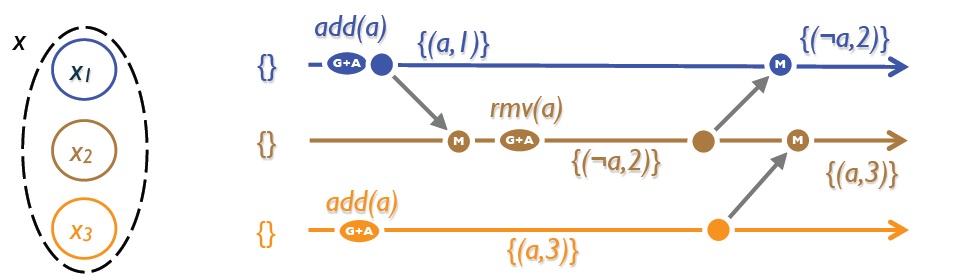} 
\caption{Last Writer Wins Set : LWW-Set~\cite{shapiro11comprehensive}}
\label{fig:LWW} 
\end{figure} 

\subsection{C-Set}

In a Counter Set (C-Set), each element is associated to a counter. Let
$k$ be the value of the counter of an element. A local $add$ can
occurs only if $k\leq 0$ and sets the counter to $1$ ($\delta =
-k+1$). A local $rmv$ can occurs only if $k > 0$ and sets the counter
to $0$ ($\delta = -k$). The CvRDT (also call PN-Set) payload contains
the set of element, and for each element a set $P$ of increments and a
set $N$ of decrements. A local $add$, resp. $rmv$, adds $|\delta|$
element in $P$, resp. $N$. The merge operation is the union of the
sets. The lookup contains elements with $|P|>|N|$.

In the CmRDT, each operation contains the difference $\delta$ obtained
during local execution. The remote operation execution adds $\delta$
to the counter. Element with a counter $k=0$ can be removed, the
others must be kept. The {\em lookup} contains elements with $k>0$.

\begin{figure}[H] 
\centering
\includegraphics[width=8cm]{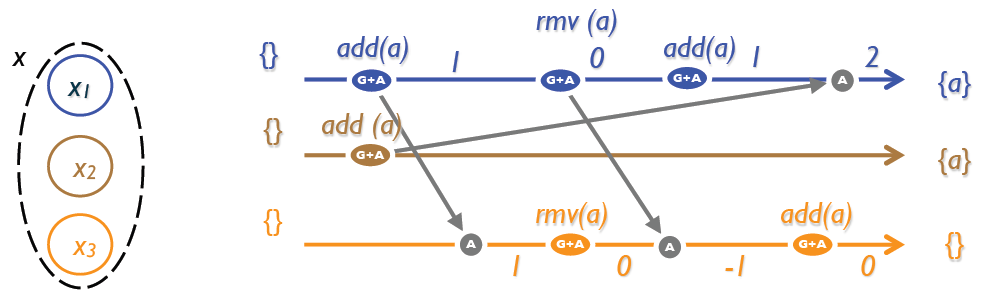} 
\caption{Counter Set : C-Set~\cite{shapiro11comprehensive}}
\label{fig:PN} 
\end{figure}

\subsection{OR-Set}

In a Observed Remove Set (OR-Set) each element is associated to a set
of unique tag. A local $add$ creates a tag for the element and a local
$rmv$ removes all the tag of the element. The CvRDT contains the set of
element, and for each element a set $T$ of tags added and a set $R$ of
tags removed. The merge operation is the union of each set. The lookup
contains elements with $T \cap R\neq \{\}$.

In the CmRDT, each operation contains the tag(s) added or
removed. Since causal ordering is ensured and since tag are unique,
the removed tag (and element with no tag) can be removed in the
payload. The {\em lookup} contains the elements of the payload.

\begin{figure}[H] 
\centering
\includegraphics[width=8cm]{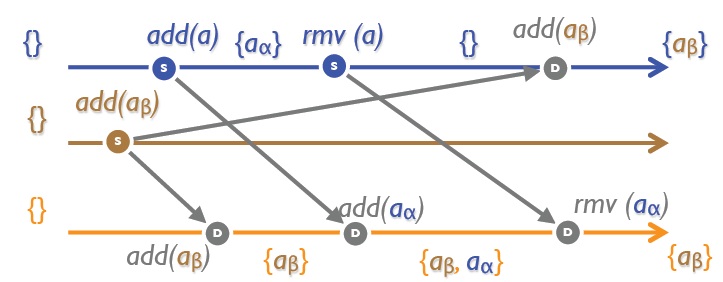} 
\caption{Observed-Remove Set : OR-Set~\cite{shapiro11comprehensive}}
\label{fig:OR} 
\end{figure} 

\subsection{Comparison}

From the application point of view, all set CRDTs provide a set lookup
and the same pre-conditions on operations (except for G-Set, since the
application cannot remove an element and for 2P-Set, since application
cannot re-add an element). They also provide the same post-conditions
of the local replica. The behavior of the presence of the elements in
the lookup can be resumed as follow~:
\begin{description}
\item[LWW-Set] an element appears in the lookup if and only if the
  operation with the higher timestamp is an $add$.
\item[C-Set] an element appears in the lookup if and only if the
  sum of the $add$ operations counters is greater than the sum of the
  $rmv$ operations counters.
\item[OR-Set] an element appears in the lookup if and only if the
  tags associated by $add$ operations are not all present in $rmv$
  operations.
\end{description}

\section{Graph Trees}
\label{sec:graph}

According to standard graph theory definition, a tree -- more precisely
an arborescence -- is a connected directed acyclic graph in which a
single node $root$ is designated as the root and there is a
unique path from $root$ to any other node \cite{diestel2010graph}. A
tree is thus a ordered pair $G=(V,E)$ with $V$ a set of nodes and
$E\subseteq V\times V$ a set of directed edges. If $(x,y) \in V$, we
say that $y$ is a {\em child} of $x$. Since $G$ have no directed
cycle, $E^{*}$, the transitive closure of $E$, is a partial strict
order on $V$. There is a path from $x$ to $y$ if and only if $(x,y)
\in E^{*}$.

We define {\em subtrees} in a more general manner than usual by
including edges directed to the subtree. In an actual tree there is
only one such edge.
\begin{definition}
  An ordered pair $(N, F)$ is a subtree of the tree $(V,E)$ and is
  rooted by $n\in N$ if $N \subset V$, $F \subseteq E$ and $(N, F
  \setminus ((V \setminus N) \times N))$ is a connected directed
  acyclic graph with a unique path from $n$ to any other node and if
  $(V \setminus N, E \setminus F)$ is a tree.
\end{definition}

We consider that the graph can be modified trough two minimal
operations $add$ and $rmv$. The operation $add(n,m)$ adds a node $n$
in the graph under the node $m$ and the operation $rmv(N, F)$ removes
the set of nodes and edges appearing in a subtree. Other 
operations, e.g. adding a whole subtree, or removing a node while
moving all its children under the father of $n$, can be defined upon
these minimal operations\footnote{For instance, adding a whole subtree
  consists of a list of $add$ operations; remove a node while
  keeping its children consists of a list of $rmv$ and a list of
  $add$.}.
We have the following formal definition of the sequential operations
on a tree. For sake of simplicity, we consider that the root of the
tree is always present and immutable.
\begin{itemize}
\item $pre(add(n,m), (V,E)) \equiv n \notin V \wedge m \in V$
\item $post(add(n,m), (V,E)) \equiv n \in V \wedge (m, n) \in E$
\item $pre(rmv(N, F), (V,E)) \equiv subtree((N,F),(V,E))$   
\item $post(rmv(N, F), (V,E)) \equiv N \cap V = \{\} \wedge F \cap E = \{\}$.
\end{itemize}

With such pre- and post-conditions we can ensure that the graph
$(V,E)$ stays a tree in case of a sequential modifications. However,
in case of a concurrent modifications, these post-conditions conflicts
if a node is concurrently added and removed, if a node is concurrently
deleted while a children is added, or if a node concurrently added
under to different fathers.

\subsection{Concurrent addition and deletion of the same element}

The post-conditions of $add(n,m)||rmv(N, F)$ with $n \in N$
conflicts, i.e. a node cannot be concurrently added and
removed. Indeed, as for a set, the post-condition of $add$ and $rmv$
operations cannot be globally ensured while ensuring convergence.

We can uses sets CRDT to bypass the conflict.  By using sets CRDT to
handle both sets of nodes and edges, we obtain a data type $(V,E)$
that is obviously eventually consistent. Such trees CRDT have the
following behavior.

\begin{description}
\item[GG-Tree] In a Grow-only Graph Tree (GG-Tree) nodes and edges can
  only be added and never removed. A GG-Tree uses G-Sets as the sets
  of nodes $V$ and edges $E$.
\item[2G-Tree] In a Two-phases Graph Tree (2G-Tree) nodes and thus
  edges can only be added once. A 2G-Tree uses the lookup of a 2P-set
  as the set of nodes $V$.  There is no need for using set CRDT for
  the edges since a new edge is only added with a new node. Thus, an
  edge cannot be added and removed concurrently.
\item[LG-Tree] In a Last-writer-wins Graph Tree (LG-Tree) a node, or a
  edge, appears in the lookup if and only if the operation with the
  higher timestamp applied on it is an add. LG-Tree uses the lookup of
  LWW-element-Sets as the sets of nodes $V$ and edges $E$. The
  operations become $add(n,m,t)$ and $rmv(N,F,t)$. The execution of
  the operations consists in updating the timestamp and the visibility
  flag if the operation timestamp is newer that the attached
  timestamp.
\item[CG-Tree] An Counter Graph Tree (CG-Tree) a node or a edge
  appears in the lookup if and only if the sum of $add$ operations
  applied on it is greater than the sum of $rmv$ operations. A CG-Tree
  uses the lookup of C-Sets as the sets of nodes $V$ and edges
  $E$. The operation $add$ and $rmv$ associate an increment to each
  element appearing in these operation.  The execution of the
  operation applies this positive or negative increment to the targeted
  elements.
\item[OG-Tree] In an Observed-remove Graph Tree (OG-Tree), a node or a
  edge appears in the lookup if and only if the tags associated by
  $add$ operations applied on it are not all removed by $rmv$
  operations. An OG-Tree uses the lookup of OR-Sets as the sets of
  nodes $V$ and edges $E$.  The operation $add$ associates a unique tag
  and $rmv$ associates a set of tag to each element appearing in these
  operation.  The execution of the operation adds or removes the
  tag(s) to the targeted elements.
\end{description}

\subsubsection{Set lookup}

From all the above tree CRDTs, we can obtain $(V_L, E_L)$ a pair of
lookup sets which is eventually consistent since lookup of the set
CRDTs is eventually consistent. However, in case of concurrent
modifications, this pair $(V_L, E_L)$ is not a graph since $E_L$ may
contain edge on nodes not in $V_L$. For instance in the LG-Tree, if
the operations $add(n, m, t)$ and $rmv(N, F, t')$ with $n\in N$ and
$t'>t$ are generated concurrently, we get $(m,n) \in E_L$ while $n
\notin V_L$.

The pair $(V_L, E_L \cap (V_L \times V_L))$ is a graph but may not be
a tree. It can be non-connected if a replica adds a node under $m$ and
another replica removes $m$ concurrently. Also, there can be several
paths between the root and a node since two replicas can add
concurrently a node under two different fathers. Moreover, such a
graph may contains cycles if, for instance, a replica generates
$add(x,root)$ followed by $add(y,x)$ and another replica generates
concurrently $add(y,root)$ followed by $add(x,y)$.

\begin{figure}[H] 
\centering
\includegraphics[width=6cm]{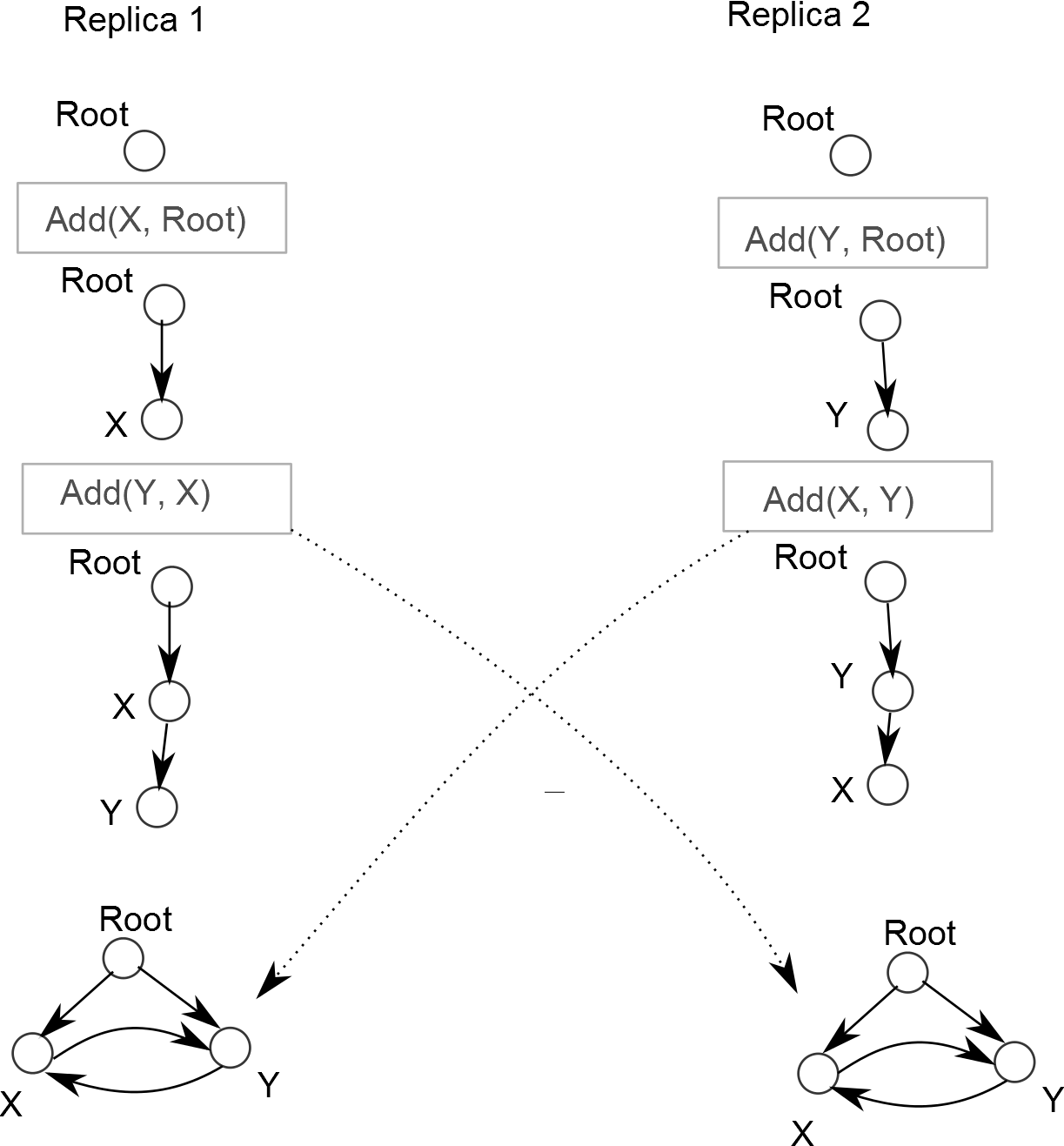} 
\caption{Cycle generated by concurrent additions}
\label{fig:cycles} 
\end{figure} 

We propose to compute a lookup from the pair $(V_L, E_L)$ on order to
obtain a lookup which is an eventually consistent tree. In the
following sections, we propose different policies to firstly reconnect
or drop the isolated components to obtain a rooted graph, and to
secondly to express a tree from the rooted graph.

\subsection{Connection policy}
\label{glue}

The operations $add(n,m)||rmv(N, F)$ with $m \in N$ and $n \notin N$
conflicts since a naive lookup of the underlying sets CRDTs of nodes and
edges is a non connected graph.  However, several solutions can be
designed to produce a graph which is rooted, i.e. with at least one
path from the root to any other node. The solutions can be to ``skip''
such $add$, to ``recreate'' the removed ascendant(s), or to place such
added nodes ``somewhere'' in the tree (for instance under the
root). We compute a rooted graph $(V_C, E_C)$ directly from the
lookup $V_L$ and $E_L$ of the supporting sets CRDTs.

We note $E_G=(E_L\cap (V_L\times V_L))$. We call a {\em orphan node},
a node $n$ in $V_L$ such that $(root,n) \notin E_G^{*}$. Since a node
is always added with an edge directed to it, an orphan node $n$ has at
least one edge in $(m,n) \in E_L$ directed to it; if $m\notin V_L$, we
call $(m, n)$ {\em an orphan edge}, elsewhere $m$ and $n$ are parts of
the same {\em orphan component}.

To compute $(V_C, E_C)$, we start by adding all non-orphan nodes and
the edges between them in $(V_L, E_L)$. Then, we treat the orphan
nodes in $V_L$. Considering each orphan node $n$, we can apply the
following {\em ``connection'' policies}~:

\begin{description}
\item[skip] {\em drops the orphan node.} This algorithm consists
  simply on a graph traversal starting from the root and is in
  $\Theta(|E_L|+|V_L|)$.

\begin{figure}[H] 
\centering
\includegraphics[width=9cm]{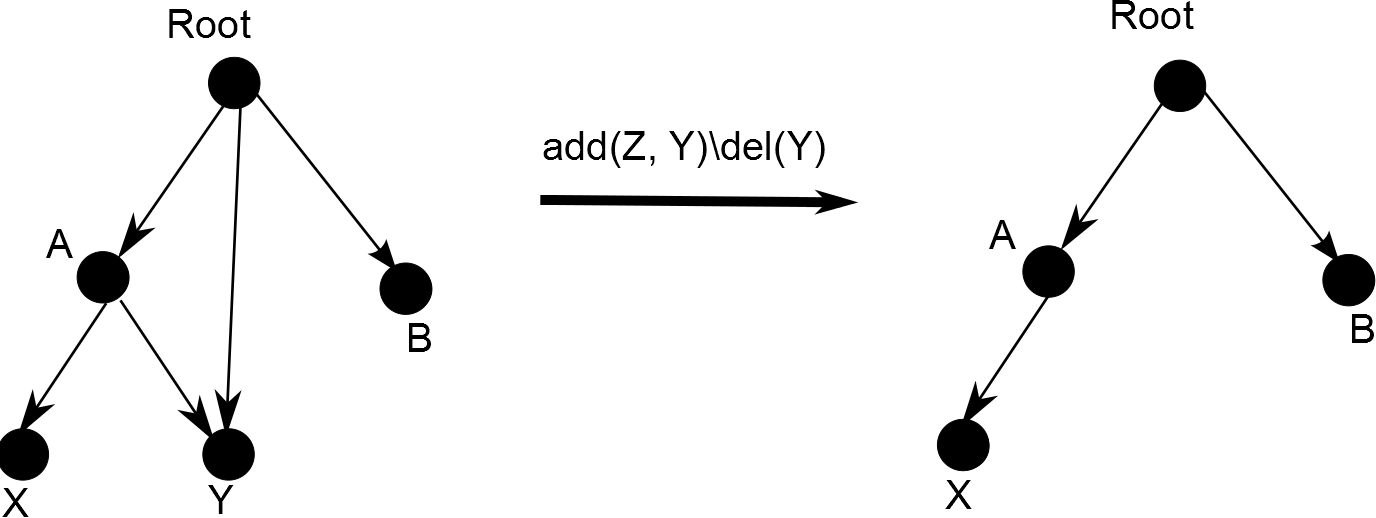} 
\caption{skip policy}
\label{fig:Skip} 
\end{figure}

\item[reappear] {\em recreates all paths leading to orphans
    components}. We add all edges $(n, y)$ such that $y \in V_L$.  For
  each orphan edge $(x, n)$ we add all paths (nodes and edges) that
  have ever existed between $root$ and $y$.  This policy requires to
  keep as tombstones all the edge ever added to the graph.  This
  algorithm is in $\Theta(|E|+|V|)$.

\begin{figure}[H] 
\centering
\includegraphics[width=9cm]{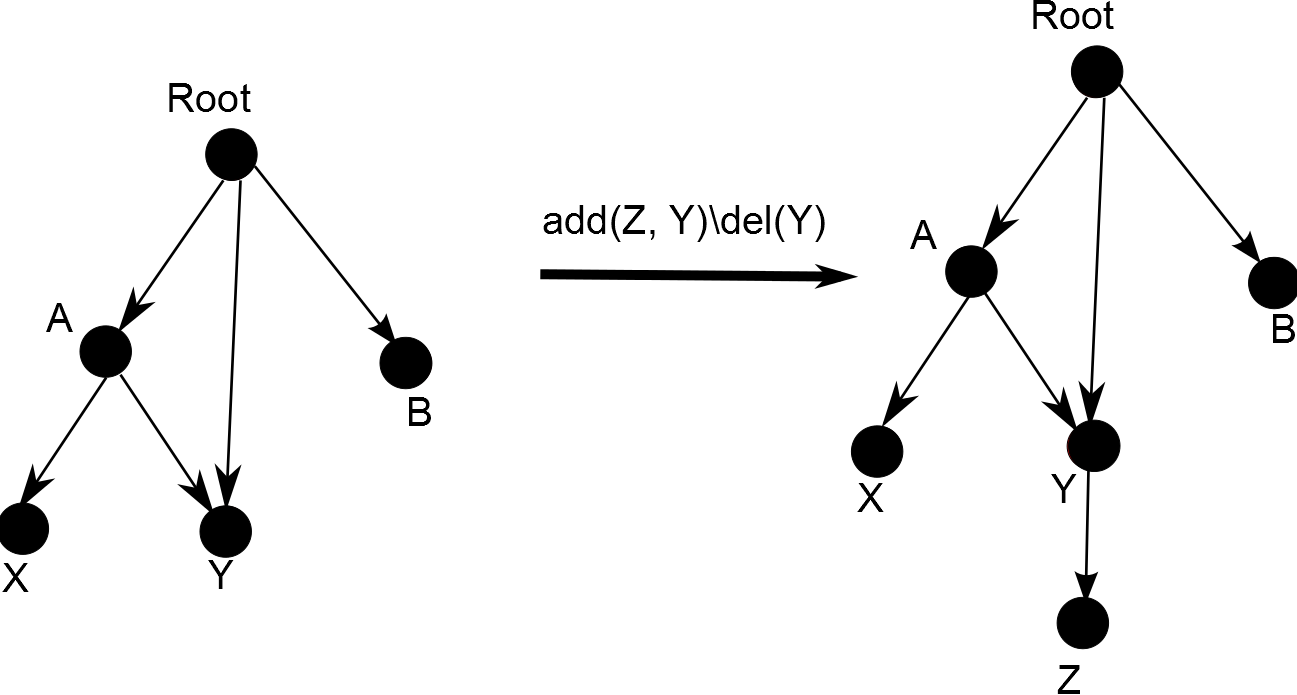} 
\caption{reappear policy}
\label{fig:Reappear} 
\end{figure}

%\item[shortest] {\em recreates the shortest path connected to orphan
%    components}. We add all edges $(n, y)$ such that $y \in V_L$.  For
%  each orphan edge $(x, n)$ we add the shotest path (nodes and edges)
%  that have ever existed between $root$ and $y$. This policy requires
%  to keep as tombstones like previous policy.  This algorithm works in
%  two phases. First we calculate all shortest path with a {\em
%    Breadth-first} algorithm. Second we start by an orphan edge and we
%  add the nodes and edges on shortest path until a connected node.
%  This algorithm is in $\Theta(|E|+|V|)$.
%

%\begin{figure}[H] 
%\centering
%\includegraphics[width=9cm]{} 
%\caption{shortest policy}
%\label{fig:shortest} 
%\end{figure} 

\item[root] {\em places the orphan components under the root.} We add
  all edges $(n, y)$ such that $y \in V_L$. For each orphan edge $(x,
  n)$, we add $(root,n)$.  This algorithm consist in modification of
  all orphans edges : we replace inexistent node by root.  This
  algorithm complexity is $\Theta(|E_L|+|V_L|)$.

\begin{figure}[H] 
\centering
\includegraphics[width=9cm]{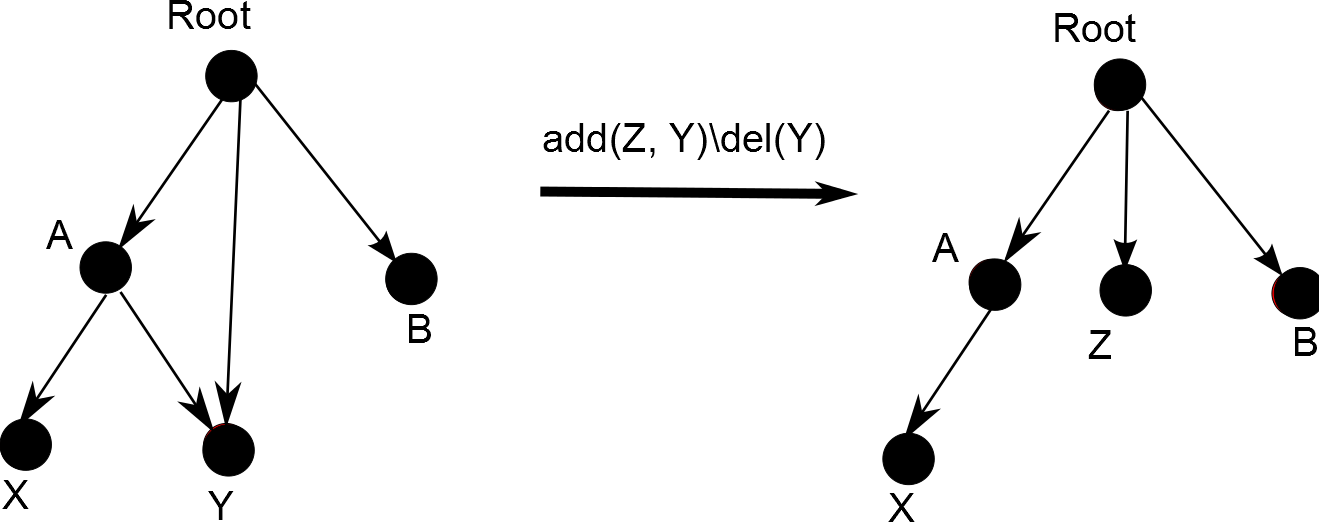} 
\caption{root policy}
\label{fig:Root} 
\end{figure}

\item[compact] {\em places the orphan components under the connected
    node that have ever a path to it}. We add all edges $(n, y)$
  such that $y \in V_L$. For each orphan edge $(x, n)$, we add $(z,
  n)$ for all $z$ which is a non-orphan node such that a path that does
  not contains non-orphan nodes have ever existed between $z$ and $x$.
  This policy requires to keep as tombstones all the edge ever added
  to the graph.  Let $connectSet$ be a set associated on all node. By default
  this set is empty. %An empty $connectSet$ on ormean that the   
  We execute the follow
  algorithm with node previously deleted and connected to orphans edges.

 \begin{lstlisting}
function  getConnected(node n)
   if n is not orphan then
         return {n}
    endif
    if n.connectSet is empty then
           for n' in father node
                 n.connectSet.add(getConnected(n'))
           return n.nonnectSet;
    else
           return n.connectSet
    end if
  \end{lstlisting}
 
 \begin{figure}
 \centering
 \includegraphics[width=9cm]{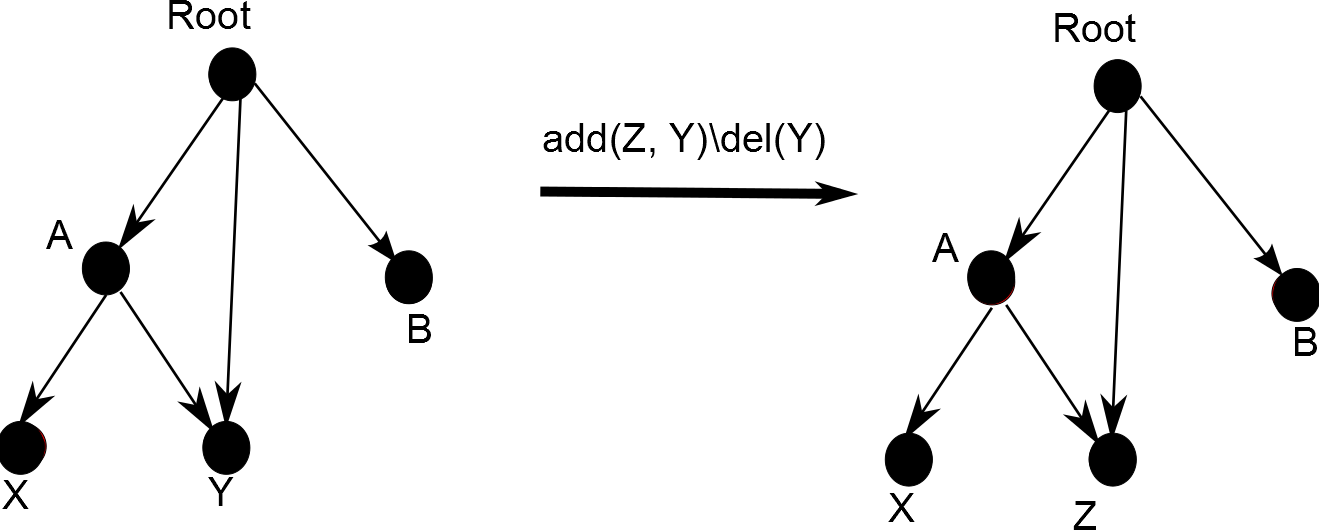}
 \caption{Compact policy}
\label{fig:Compact} 

 \end{figure}
 Finally for all orphan edge we add all edges link connected from each element returned by algorithm to the component node.% with all node returned by getConnected.
 This algorithm is in $\Theta(|E|+|V|)$
\end{description}

Using any of the above policies ensures that $(V_C, E_C)$ is a rooted
graph for any tree CRDT. Such a graph is eventually consistent and
there is at least one path from $root$ to each other node.

The {\em reappear} and {\em compact} policies require to keep all edges
that have ever existed as tombstones. In some set CRDTs approaches
(such as CmRDT LWW-Set or all set CvRDTs), these tombstones already
exist in the payload. In the {\em compact} policy, we can store only
the set of node that have ever been accessible from one node.

%\todo{dead path / livingdead path ?}
 
\subsection{Mapping policy}
%----------------------- Stephane
\label{depolyhandrycyclotron}

The operations $add(n,m)||add(o,p)$ with $n=o$ and $m\neq k$
conflict. A node cannot be concurrently added under two different
nodes, since the graph may contains different paths to a node and
directed cycles. To obtain a tree we start from the rooted graph
$(V_C, E_C)$ and we apply one of the three following {\em
  ``mapping'' policies}.
% A solution  is to compute, or at least present to the client using a $lookup$ function, a tree from the graph. 
%\todo{Write algorithm ?}

\begin{description}
\item[several] : We construct all the acyclic paths in the
  graph. Thus, copies of the node can appear in different places in
  the tree. Remove a copy of the node removes all the others. The
  algorithm is a simple depth-first that begins on root node. For each
  node, the algorithm is
  \begin{enumerate}
  \item Mark the node.
  \item Construct a list $l$ composed of recursive calls on all
    unmarked children nodes.
  \item Unmark the node.
  \item Returns a tree composed of the node and the list $l$ of
    children
  \end{enumerate}
  Obtaining a description of all simple paths in a directed graph can
  be computed using $O(|V|^3)$ matrix operations. Such a tree contains
  up to $|V|!$ edges in case of a complete graph.
  
\item[one] : This policy adds in the tree each node in $V_L$ only once. Thus,
  the algorithm must make a choice on the edges~:
  \begin{description}
  \item[newer] : The ``newer'' variation needs a timestamps on edges
    to select the newer. This is adapted to LG-Tree that already has
    such timestamp\footnote{This is also adapted to OG-Tree if tag are
      constructed with clocks}. We construct a Maximal Spanning Tree
    (MST) with the edges sorted by timestamp. We will not obtain a
    tree composed with only newest edges since such edges may
    constitute a cycle. But we will obtain a tree with the maximal sum
    of timestamp. This tree will be rooted since $root$ has no edge
    directed to it and must be included in the MST. Building a MST in
    a directed graph can be achieved in $\Theta(|E| + |V| log
    |V|)$~\cite{gabow86efficient}.
  \item[higher] : This variation is designed for CG-Tree and OG-Tree. We
    construct a MST maximising edges counters or edge tags numbers.
  \item[shortest] : This variation can be used for all type of
    tree. For each node we select the shortest path to it. A
    Breath-first algorithm can be used to produce the tree in
    $\Theta(|E| + |V|)$.%~\cite{fredman87fibonacci}.
 %  \item[Longest Path] : As far we know this problem is NP-complete. We
%     suggest to use the several algorithm with selection of longest
%     path with numbering the maximum distance and the father witch have
%     this distance. %It's NP-compete.
  \end{description}
  %	The algorithm for All One policy is a breadth-first. A node can be appear more than one times in queue. 
  %	\begin{algorithm}
  
  %	\end{algorithm}
\item [zero] : The zero policy removes all subtrees rooted by nodes
  which have more than one edge directed to them. For each node the
  algorithm checks the number of input edges. The algorithm traverses
  the graph starting from the root but does not add nodes with an in
  degree greater than two and does not visit its children. The
  algorithm is in $\Theta(|V|+|E|)$.
\end{description}

% For instance, the Dijkstra
% algorithm~\cite{dijkstra59note} can be used to produce such a tree.
% \subsubsection{Discution on efficiency of lookup}

\subsection{Discussion on graph trees}

Thus, we can obtain a lookup using a graph structure managed by set
CRDTs. This lookup function is composed in three phases. The first
phase is the lookup of the underlying set CRDT. The second phase
computes a rooted graph. The third phase expresses a tree from the
rooted graph. Such data types are obviously CRDTs since the underlying
sets are eventually consistent, and since the lookup tree is computed
with deterministic policies\footnote{We assume existence of a total
  order between nodes to ensure determinism of graph algorithms.},
this lookup is also eventually consistent.

However, depending on the policy chosen, the client application can
observe moves on the lookup tree. For instance, using a root policy,
if a removed father is added again, its orphan son will move from the
root to its original place. 

We call {\em monotonic policy}, a policy where $add$ and $rmv$
operations do not move an existing node in the lookup. The
non-monotonic policies are : root, compact and all one variations. The
monotonic policies are skip, reappear, zero and several.

The lookup function works after each modification of the tree.  The
complexity of this function could be up to factorial for the several
policy. So, some optimizations are useful.  We call {\em incremental}
lookup function, a lookup function which reuses an previous calculus
to avoid recompute entire tree.  For example, in the reappear policy,
when an orphan node should be added, the incremental lookup function
adds to the tree the several/one/zero paths leading to this node.  On
the other hand, when the father of an orphan node is added, the other
reappeared paths must be removed of the lookup. Finally, when an
orphan node is removed, the reappeared paths should disappear. Such
incremental versions have the same worst-case complexity than
non-incremental ones but are slightly more efficient. However,
eventual consistency of the lookup is less straightforward to ensure
in such incremental versions.

%Its complexity is at least equals as graph size for one operation.

%\todo{incremental (zombies!), complexity ?}

\subsection{A special case : 2G-Tree} 
\label{sec:2G}

A two phases graph tree (2G-Tree) uses a
2P-set~\cite{shapiro11comprehensive} as the set of nodes $V$. A 2P-Set
consists in defining unique elements that can only be added once on
all replicas.  Thus, node and edge cannot be added and removed
concurrently.  The other main advantage of the 2G-Tree is that the
conflict $add||add$ does not occurs since a node can only be added
once. Thus, 2G-Tree do not require any mapping policy.

In a 2G-Tree, the conflict $add(n,m)||rmv(N, F)$ with $m \in N$ and $n
\notin N$ can be resolved using solutions presented above. Assuming
that node can be found in constant time (using hash table), the skip
policy can be computed incrementally in $\Theta(1)$ time. Indeed, the
remove of a node consists in remove of the entire subtree, and
addition of an orphan node has no effect. Moreover a CvRDT 2G-Tree can
send constant size messages for remove : $rmv(n)$ with $n$ the root of
the subtree.  The reappear and compact policy can be computed in
$\Theta(|V|)$ since there is only one path, of size at most $|V|$,
leading to a given node. Finally, in the root policy, the addition of
a node is always in $\Theta(1)$ time.

\subsection{EDGE trees}

Since a node is always added with an edge directed to it, one can
represent a tree using only edges. Such a choice leads to a data
structure we call {\em edge tree}. Given a finite or infinite set of
nodes $V$, an edge tree is a subset of all ordered pairs. An edge tree
has a root with no edge directed to it, and for all edge, it exists one
unique parent edge. A subtree is rooted by a node and include the
edge\footnote{In case of concurrent modifications, their can be
  several such edges.} directed to this node and a set of connected
edges.

\begin{definition}
  An {\em edge tree} $T$ rooted by $root$ is a subset of $V\times V$
  such that for all $(x,y) \in T$ either $x = root$, or there exists a
  unique $z \in V$ such that $(z,x) \in T$.

  The set $S$ is a {\em subtree} rooted by $n\in V$ of $T$ if $S
  \subset T$, $\exists (x, n) \in S$, $\forall (a, b) \in S.~b \neq n
  \implies (n, b) \in S^{*}$ and $T \setminus S$ is an edge tree.
\end{definition}

We have the following formal definition of the sequential operations
on an edge tree.
\begin{itemize}
\item $pre(add(n,m), T) \equiv \exists (z,m) \in T$
\item $post(add(n,m), T) \equiv (m, n) \in T$
\item $pre(rmv(S), T) \equiv subtree(S,T)$   
\item $post(rmv(S), T) \equiv S \cap T = \{\}$.
\end{itemize}

As for graph tree, the post-conditions of $add$ and $rmv$ conflict and
an edge tree CRDT uses a set CRDT to handle the set of edges.  We can
apply the same connecting and mapping policies than for graph tree to
compute a {\em tree lookup} of the CRDT set. We simply consider that a
node belong to a tree if and only if it appears on an edge of tree.

Such GE-Tree, 2E-Tree or OE-Tree will have exactly the same behavior
than respectively GG-Tree, 2G-Tree and OG-Tree. Indeed, in such trees,
we cannot remove edges (GG-Tree), or we cannot have an edge directed
to a removed node (2G-Tree and OG-Tree). Thus, GE-Tree, 2E-Tree and
OE-Tree are optimizations of their respective xG-Tree.

The LE-Tree and CE-Tree have a different behavior than LG-Tree and
CG-Tree. Indeed, let consider a first replica that inserts a node $x$
under a node $y$, and then removes $x$, while a second replica insert
$x$ under a node $z$. Depending on the timestamps (LG-Tree) or on if
another replica removes $(y, x)$ concurrently (CN-Tree), the node $x$ --
and thus $(z, x)$ -- can appear or not in the lookup. In LE-Tree and
CE-Tree, $(z, x)$ appears in the lookup.

\section{Word trees}
\label{sec:word}

In this section we introduce {\em word trees}, another data structure
to manage concurrently modified trees. A word represents a path in the
tree, a tree can be defined as a set of words~: the set of paths
existing in this tree. We use the standard definitions about words.
Let $\Sa$ be a finite -- or infinite -- ordered alphabet, a word is a
finite sequence of elements from $\Sa$. The length of a word $w$,
noted $|w|$ is the number of elements of $w$. We denotes $\e$ the
empty word. The concatenation $vw$ is the word formed by the joining
end-to-end the words $v$ and $w$. The set of all strings over $\Sa$ of
any length is the Kleene closure of $\Sa$ and is denoted $\Sk$.

We define a tree as a set of the words representing all the paths in
the tree. Since all the paths are present in the set, any prefix of a
path is also a path of the tree. The empty word $\e$ is the root of
the tree.
\begin{definition}
  A word tree $T$ is a subset of $\Sk$, such that $\e \in T$ and
  $\forall p,e \in \Sk.~pe\in T \implies p\in T$.
\end{definition}

A subtree is defined as complete set of paths with a common prefix.
\begin{definition}
  In a tree $T$, a subtree $P$ is a subset of $T$ such that $T
  \setminus P$ is a tree and such that $\exists w \in T.~\exists S
  \subset \Sk.~P=\{ws|s\in S\}$ and $S$ is a tree.
\end{definition}

As for graph tree, there is two operations to modify a word tree. The
operation $add(n,p)$ with $n\in\Sa$ and $p\in\Sk$ adds a new path and
$rmv(P)$ removes the set of paths representing a subtree.
\begin{itemize}
\item $pre(add(n, p), T) \equiv p \in T \wedge pn \notin T$
\item $post(add(n, p), T) \equiv pn \in T$
\item $pre(rmv(P), T) \equiv P \subset T \wedge subtree(P, T)$   
\item $post(rmv(P), T) \equiv P \cap T = \{\}$
\end{itemize}

With such pre- and post-conditions, we can ensure that the set $T$ is
sill a tree in case of sequential modifications.  In case of
concurrent modifications, word trees differ from graph trees since
only $add||del$ conflicts occurs.

\subsection{Concurrent addition and remove of the same element}

A for mathematical set, the post-conditions of $add(n,p)$ and $rmv(P)$
with $np \in P$ conflicts since convergence cannot be achieved. As for
graph trees, we can use set CRDT to bypass the conflict. The obtained
tree CRDT have the following behavior~:

\begin{description}
\item[GW-Tree] a path can only be added and never removed.
\item[2W-Tree] a path can only be added once. Such a CRDT has the same
  behavior than the 2G-Tree and 2E-Tree.
\item[LW-Tree] a path appears in the lookup if and only if the
  operation with the higher timestamp applied on it is an $add$.
\item[CW-Tree] a path appears in the lookup if and only if the number
  of $add$ operations applied on it is greater than the number of
  $rmv$ operations.
\item[OW-Tree] a path appears in the lookup if and only if the tags
  associated by $add$ operations applied on it are not all removed by
  $rmv$ operations.
\end{description}

All the above data types are obviously eventually consistent. But the
lookup presented must be a tree even in case of the concurrent
addition of a node and remove of its father.

\subsection{Concurrent addition of a path and remove of the prefix}

As for graph and edge trees, the naive execution of operations
$add(n,p)$ and $rmv(P)$ with $p \in P$ produce a set of path which is
no longer a tree. Thus we need to compute a lookup which is a tree. We
compute a lookup tree $LT$ from the set of path $LS$ obtained from the
lookup of the supporting set CRDT.

We call a orphan path, a path in $LS$ that has a prefix which is not
in $LS$. We start by adding all non-orphan paths of $LS$ to
$LT$. Then, we treat the orphan paths in $LS$ in length order
(shortest first, then $\Sa$ order). Considering each orphan
path $a_1a_2\ldots a_n \in LS$ with $\forall i \in [1,n].~a_i\in \Sa$,
we can apply the following {\em connection policies}~:

\begin{description}
\item[skip] {\em drops the orphan path.}
\item[reappear] {\em recreates the path leading to the orphan path.}
  We add all $a_1 \ldots a_j$ with $j \in [1,n]$.
\item[root] {\em places the orphan subtree under the root.} We add $a_j
  \ldots a_n$ to $LT$ with $j$ such that $a_1 \ldots a_{j-1} \notin
  LS$ and $\forall k \in [j,n]$, $a_1 \ldots a_{k} \in LS$.
\item[compact] {\em places the orphan subtree under its longest
    non-orphan prefix.} We add $a_1 \ldots a_ma_j \ldots a_n$ to $LT$
  with $j$ and $m$ such that $m<j$ and $a_1 \ldots a_m \in LT$ and
  $a_1 \ldots a_{m+1} \notin LS$ and $a_1 \ldots a_{j-1} \notin LS$
  and $\forall k \in [j,n]$, $a_1 \ldots a_{k} \in LS$.
\end{description}

\begin{example}
  For a lookup $LS=\{\e, a, ab, ac, abcd, abcde, abcdefg\}$, the
  orphans path are $\{abcd, abcde, abcdefg\}$ and we obtain $LT$ equal
  to~:
\begin{description}
\item[skip]     $\{\e, a, ab, ac\}$
\item[reappear] $\{\e, a, ab, ac, abc, abcd, abcde, abcef, abcdefg\}$
\item[root]     $\{\e, a, ab, ac, d, de, g\}$
\item[compact]  $\{\e, a, ab, ac, abd, abde, abdeg\}$
\end{description}
\end{example} 

Using any of the above policies ensures that the lookup trees presented
to the client by any CRDT tree are eventually consistent.

\begin{theorem}
  The lookup sets $LT$ computed using a skip, root, reappear, or compact
  policy are tree and are eventually consistent.
\end{theorem}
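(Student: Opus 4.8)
The plan is to verify, for each of the four policies, the two defining clauses of a word tree, namely that $\e \in LT$ and that $LT$ is prefix-closed ($\forall p,e.~pe \in LT \implies p \in LT$), and then to treat eventual consistency separately. Since every policy begins by inserting all non-orphan paths of $LS$ into $LT$, and since the root $\e$ is always present and is trivially non-orphan, the condition $\e \in LT$ holds uniformly across all four policies; the substance of the tree-property argument is therefore prefix-closure.

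For prefix-closure I would first record a structural lemma: if a path $w \in LS$ is non-orphan, then every prefix of $w$ is again a non-orphan path of $LS$, because a prefix of a prefix of $w$ is itself a prefix of $w$ and hence lies in $LS$. This immediately settles the \textbf{skip} policy, whose output is exactly the set of non-orphan paths, and it also disposes of the non-orphan part of the remaining three. The \textbf{reappear} policy is then immediate, since for each orphan $a_1\cdots a_n$ it explicitly inserts every prefix $a_1\cdots a_j$, so $LT$ is prefix-closed by construction.

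The real work lies in \textbf{root} and \textbf{compact}, and this is where I expect the main obstacle. For both, the inserted paths are relocations of orphan paths, and I would establish prefix-closure by induction on path length, exploiting the fact that orphans are processed shortest-first. The key observation is that every proper nonempty prefix of a relocated path is itself the relocation of a strictly shorter path of $LS$. For \textbf{root}, the suffix $a_j\cdots a_n$ attached under the root has as its proper prefixes the words $a_j\cdots a_k$ with $j\le k<n$, and each arises from the shorter path $a_1\cdots a_k \in LS$, which is orphan because it shares the missing prefix $a_1\cdots a_{j-1}\notin LS$ and was therefore inserted earlier. For \textbf{compact}, I would first pin down its operative semantics as the concatenation of those letters $a_i$ for which $a_1\cdots a_i \in LS$ (this is what the worked example computes, e.g.\ $abcdefg \mapsto abdeg$, deleting the letters sitting immediately after a missing prefix), and then run the same length induction to show that each prefix of this compact image is the compact image of a shorter path of $LS$ already present in $LT$. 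I would also note that the deterministic total order on $\Sa$ used to break ties among paths of equal length makes the processing order, and hence these insertions, unambiguous.

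Finally, for eventual consistency I would argue that each policy is a pure deterministic function $LT=f(LS)$ of the lookup set $LS$ of the underlying set CRDT. By the results on set CRDTs in Section~\ref{sec:set}, $LS$ is itself eventually consistent, so once mutations cease all replicas agree on $LS$; applying the same deterministic $f$, whose only choices are resolved by the fixed alphabet order, then yields identical $LT$ on every replica, so $LT$ converges. The only delicate point here is to confirm that $f$ depends on nothing but $LS$ and this fixed order, with no hidden per-replica state, which is clear from the policy descriptions.
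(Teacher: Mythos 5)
Your proposal is correct and follows essentially the same route as the paper's own proof: both obtain eventual consistency from the determinism of the policies applied, in a fixed processing order, to the eventually consistent set lookup $LS$, and both verify the tree (prefix-closure) property policy by policy, with \emph{root} and \emph{compact} handled by observing that every prefix of an added relocated path is itself the relocation of a shorter path of $LS$ that was processed earlier. Your write-up is merely more explicit (the structural lemma on non-orphan prefixes and the shortest-first length induction are compressed in the paper into a one-line ``thus all the prefixes were also added to $LT$''), and your example-driven reading of \emph{compact} ($abcdefg \mapsto abdeg$, keeping the letters $a_i$ with $a_1\cdots a_i \in LS$) matches the paper's worked example rather than its formal definition, which as literally stated would yield $abg$ --- an inconsistency internal to the paper, not a gap in your argument.
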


\begin{proof}
  Since the set of paths $LS$ is eventually consistent, and since the
  paths are treated is the same order and since each policy is
  deterministic, the computed set of paths $LT$ is eventually
  consistent.

  Set of path $LT$ is a tree since~:
\begin{description}
\item[skip] there is no orphan path in $LT$.
\item[reappear] we add an orphan path in $LT$ with all its prefixes.
\item[root] a suffix $a_j \ldots a_n$ is added to $LT$ only if $\forall
  k \in [j,n]$, $a_1 \ldots a_{k} \in LS$. Thus, all the prefixes $a_j
  \ldots a_k$ were also added to $LT$.
\item[compact] a path $a_1 \ldots a_ma_j \ldots a_n$ is added to $LT$ only
  if $\forall k \in [j,n]$, $a_1 \ldots a_{k} \in LS$. Thus all the
  prefixes $a_1 \ldots a_ma_j \ldots a_k$ were also added to $LT$.
\end{description}
\end{proof}

Computing a lookup tree $LT$ every time the lookup set $LS$ is
modified ensures easily eventual consistency, but only some policies
are monotonic. We consider a policy as {\em monotonic} if the $add(p)$
operation do not moves any already existing node in tree. The {\em
  root} and {\em compact} policies are not monotonic since when the
missing ascendants are added again, the orphan subtree moves to its
original place.

The advantage of monotonic policies is that the client of the tree
CRDT will not observe such move, and that a client operation on an
orphan path do not require a complex translation into an operation on
the supporting set CRDT.

\subsection{Complexity and optimisation}

Lets assume that a hash table is used to implement the set of
paths. Thus, checking for all prefixes of path if they belongs to a set
have an average time complexity proportional to the length of the
path. Thus, the time complexity to apply a policy to a path is linear.
Also, the time complexity to compute a lookup tree is $\Theta(pk)$ in
average, with $p$ the number of paths in $LS$ and $k$ the average
length of these paths. The worst case time complexity is $\Theta(n^2)$
with $n$ the number of paths in $T$.

However, at least for the monotonic policies, we can compute $LT$
incrementally, i.e. without parsing the whole set $LS$.

\begin{description}
\item[skip] When an orphan path is supposed to be added in the lookup,
  we drop it. When an non-orphan path $p$ is added, we add recursively
  all $pa \in LS$ with $a \in \Sa$. When a path $n$ is supposed to be
  removed in the lookup, we remove all the paths that are prefixed by
  $n$. Moreover, a tree CmRDT can send only the operation $rmv(n)$
  with $n$ the common prefix of the subtree, since the whole subtree
  will be removed.
\item[reappear] In the {\em reappear} policy, when an orphan path is
  removed we must remove the reappeared path to ensure eventual
  consistency.  This can be done by marking the reappeared paths as
  ``ghosts''.  When path previously marked as ghost is supposed to be
  added in the lookup, we unmark it. When an orphan path $n$ is
  supposed to be added in the lookup, we add all the prefixes of $n$
  that are not existing and we mark them as ghost. When a path $n$ is
  supposed to be removed in the lookup, if $n$ is the prefix of a
  non-ghost path, we mark $n$ as ghost, elsewhere we remove $n$ and
  all the ghost prefixes of $n$ that are the prefixes of not any ghost.
\end{description}

%\todo{Figure?,Algorithm?,Proof?} 

\section{Ordered trees}
\label{sec:order}

In this section, we present ordered trees, where the set of children
of a node is totally ordered. For this we need to add to the unordered
trees presented above, an additional information called {\em Position
  Identifier (PI)} which allows to order the children. These position
identifiers must be totally ordered to ensure eventual consistency and
defined within a dense space to allow insertion of a node at an
arbitrary position. These position identifiers can be associated to
nodes or edges.

To obtain position identifiers, an idea to use PI already defined for
sequence editing CRDTs such as Logoot~\cite{weiss10logootundo},
Woot~\cite{oster06data}, WOOTO~\cite{weiss07wooki},
RGA~\cite{roh11replicated} or
Treedoc~\cite{preguica09commutative}. Such PIs are {\em Unique
  Position Identifier (UPI)} and thus constrain the behavior of the
trees to some kind of two-phases set that does not allow concurrent
insertions of the same element or re-insertions. So, we propose a new
non-unique position identifier to allow such operations.

%An ordred tree is a tree where each father orders his node sons or his
%edges out according to their UPI.  

In the following figures, a plain arrow represents the child relation
between node, and a dotted arrow represents the order between
children.

\subsection{Unique positioning for nodes}
 
We associate each node to an unique position identifier (UPI). The
order between the children of a node is given by their UPI.  Since
only graph trees manage nodes and since position identifiers are
unique, we obtain a 2G-Tree. If a node is added twice concurrently,
even at the same place in the ordered tree, we obtain two different
nodes. The formal definition of the operation $rmv$ do not change, a
node is a pair $(element, UPI)$ and an edge is a pair of node. The formal
definition of operation \textit{add} becomes :
\begin{itemize}
\item $pre(add((n,u),(m,v)), (V,E)) \equiv (n,u) \notin V
  \wedge (m,v) \in V \wedge unique(u)$
\item $post(add((n,u),(m,v)), (V,E)) \equiv (n,u) \in V \wedge
  ((m,v), (n,u)) \in E$
\end{itemize}

The conflict $add||add$ does not occurs, since a node can only be
added once with an UPI. In figure \ref{fig:Diff_node}, a replica
produces $add(Z,A)$ while another replica produces $add(Z',B)$
concurrently, but they are considered as two different elements with
same characteristics.

\begin{figure}[H] 
\centering
\includegraphics[width=10cm]{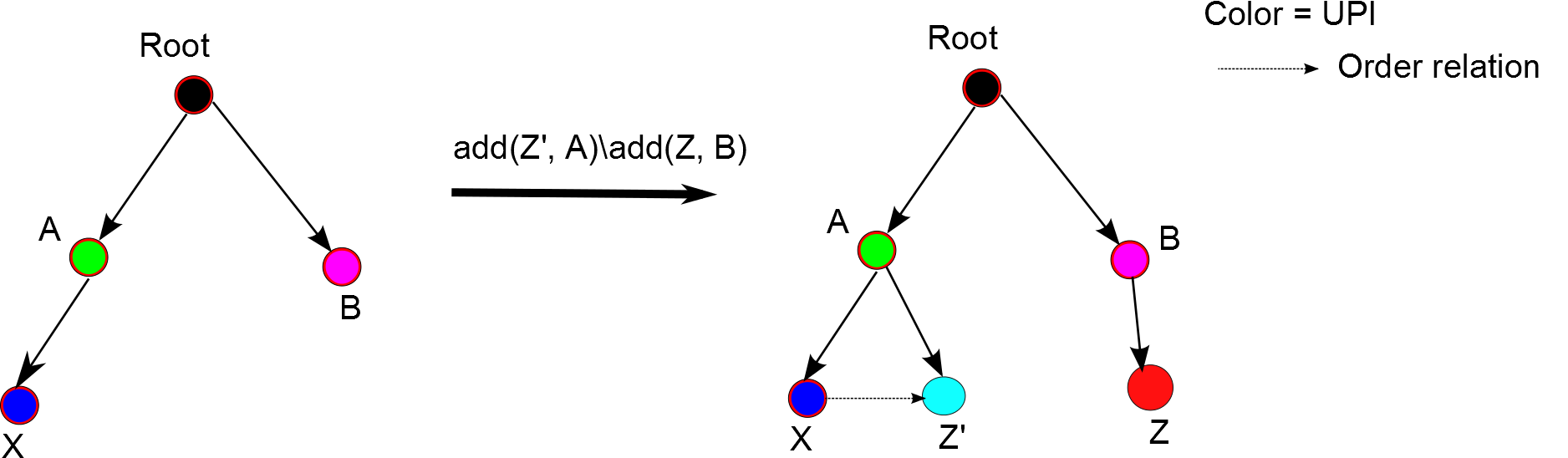} 
\caption{Concurrent operations add/add with node positioning}
\label{fig:Diff_node} 
\end{figure}

The conflict $add(n,m)||rmv(N, F)$ with $m \in N$ and $n \notin N$ can
be resolved with the same policies defined for 2G-Tree in
Section~\ref{sec:2G}. In figure~\ref{fig:Figure_add_del}, we represent
the execution of two concurrent operations $add(Z,Y)/rmv(Y)$ with
the {\em skip} policy.

\begin{figure}[H] 
\centering
\includegraphics[width=9cm]{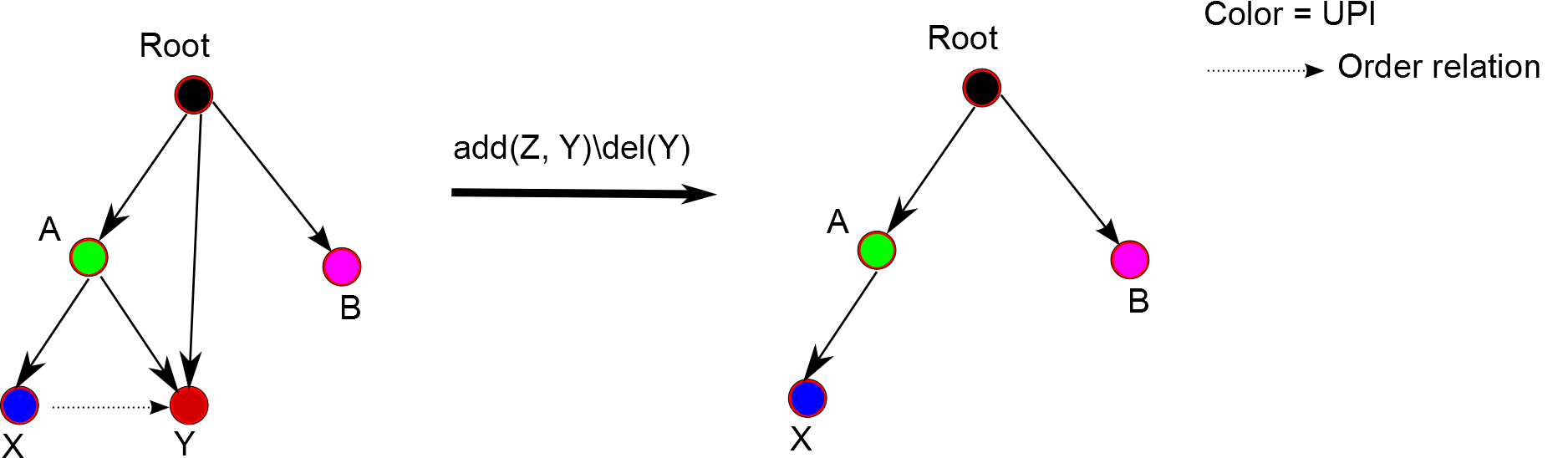} 
\caption{concurrent operations add/del with skip policy}
\label{fig:Figure_add_del} 
\end{figure} 

A tree with UPI associated to nodes can be built using any sequential
editing UPIs. However the WOOT and RGA UPI requires tombstones and
thus are more adapted to a 2P CvRDT that contains these
tombstones. For 2P CmRDT, the Logoot or Treedoc UPI approaches are
more suitable. The complexity of the children order computation
depends on the approach used. An example of such construct
is~\cite{martin09collaborative}.

\subsection{Unique positioning for Edges} 

To allow concurrent insertions on the same node at two different places
in the tree or to build edge or word tree, we propose to associate UPI
to edges. The order between the children of a node is given by the UPI
of the outgoing edge. In graph and edge trees an edge becomes a triple
$(m, n, u)$ with $m$ and $n$ two nodes and $u$ an UPI. In word trees,
a path becomes $u_1a_1u_2a_2u_2 \ldots u_{n}a_n$ with $a_i$
elements of $\Sa$ and $u_i$ UPIs. The difference between ordered trees
with edge positioning and node positioning is illustrated in
figures~\ref{fig:Diff_edge} and~\ref{fig:Diff_node}.
\begin{figure}[H] 
  \centering
\includegraphics[width=10cm]{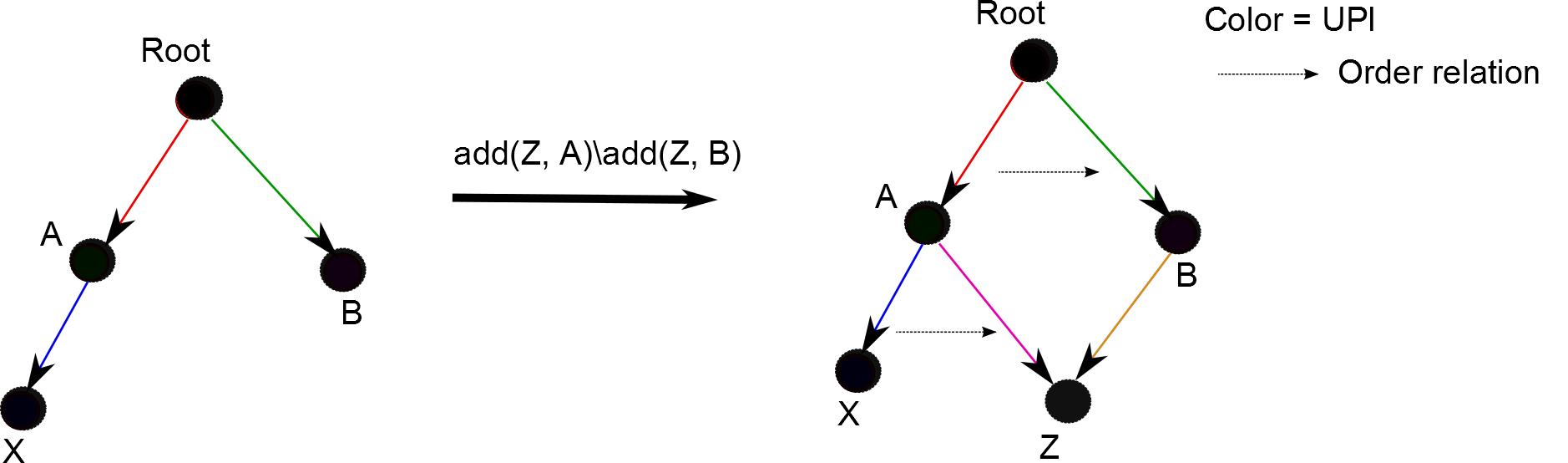} 
\caption{Concurrent operations add/add with edge positioning}
\label{fig:Diff_edge} 
\end{figure}

The formal definition of operation $rmv$ does not change and the
definition of $add$ becomes :
\begin{description}
  \item[Graph Tree]
\begin{itemize}
\item $pre(add(n,m,u), (V,E)) \equiv n \notin V \wedge m \in V \wedge
  unique(u)$
\item $post(add(n,m,u), (V,E)) \equiv n \in V \wedge (m, n, u) \in E$
\end{itemize}
  \item[Edge Tree]
\begin{itemize}
\item $pre(add(n,m,u), E) \equiv \exists (z,m,v) \in E \wedge
  unique(u)$
\item $post(add(n,m,u), E) \equiv (m, n, u) \in E$
\end{itemize}
  \item[Word Tree]
\begin{itemize}
\item $pre(add(n, p, u), T) \equiv p \in T \wedge pun \notin T \wedge
  unique(u)$
\item $post(add(n, p, u), T) \equiv pun \in T$
\end{itemize}
\end{description}

Such an edge tree is a 2E-Tree since an edge can only be added
once. And such a word tree is a 2W-Tree since a path can only be added
once.  For graph tree, we can manage node using any set CRDT to obtain
GG-Tree, 2G-Tree, LG-Tree, CG-Tree or OG-Tree. As for nodes UPI, any
sequential editing UPI can be chosen, but these are more or less
adapted to the underlying set CRDT. Logoot and Treedoc without
tombstones are more appropriate to 2x and OG CmRDT. While WOOT and RGA
are more appropriate to LG-Tree, CG-Tree and all CvRDT,

As for unordered trees, the conflicts between addition of a node and
remove of its father can be resolved using any connection
policy. Conflicts between two concurrent additions of the same element
in graph trees can be resolved using any mapping policy.

In graph tree with edge positioning, two concurrent insertion of a
node at the same place (same father and same order between children)
generates two edges (see Figure~\ref{fig:add_add_pos}). In edge tree
(and word tree), using a unique position identifier enforces to
generate two instances of the edge (and path in word tree). To allow a
different behavior, the position identifier must be non-unique.
 \begin{figure}[H]
\centering
\includegraphics[width=13cm]{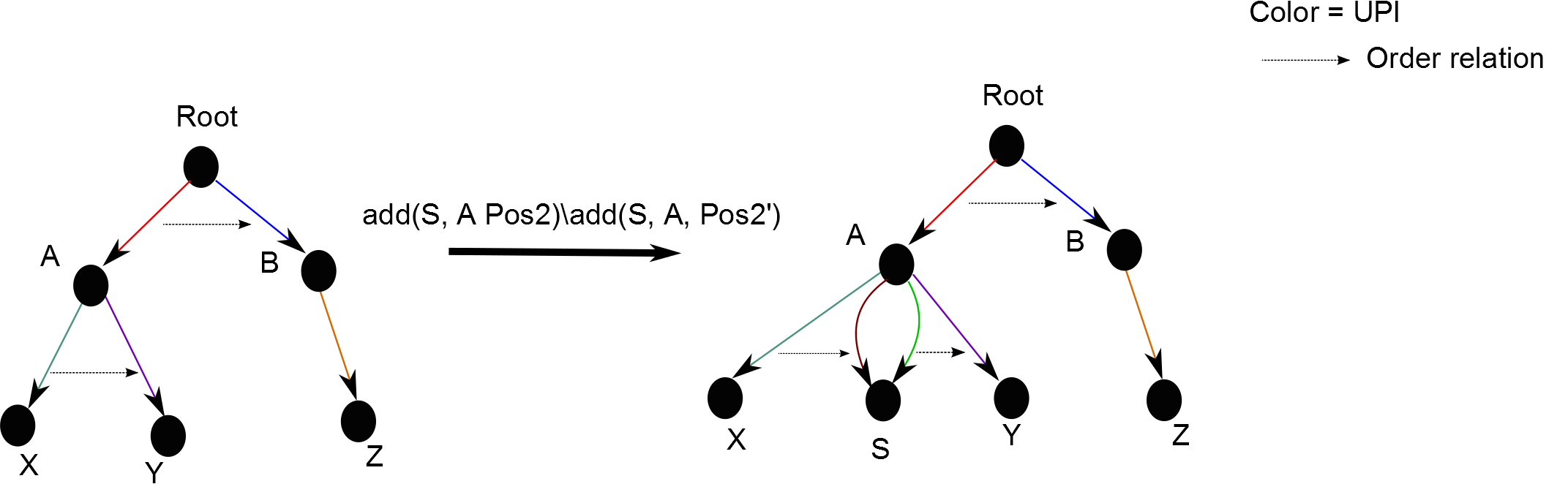} 
\caption{Two concurrent insertions at the same place with edge
  positioning}
\label{fig:add_add_pos} 
\end{figure}

% To avoid this conflict, it is best to identify the edges by not unique Position Identifier (PI), instead of UPI. To create
% an PI, we designed a new version of WOOT, called WWOOT described in section\ref{??}

\subsection{A new sequence editing CRDT : WOOTR}

Non-unique position identifiers must be totally ordered and defined
within a dense space. To obtain such properties we define a new
sequential editing CRDT called Recursive-WOOT (WOOTR).

WOOTR elements are defined inductively upon an alphabet $\Sa$ (or set
of node).
\begin{itemize}
\item $\vdash$ and $\dashv$ are elements
\item a triple $\langle a, e, f \rangle$ is an element if $a \in \Sa$
  and $e$ and $f$ are elements.
\end{itemize}

The elements $\vdash$ and $\dashv$ mark respectively the begin and the
end of a sequence. When a character $a$ is inserted between two
elements $p$ and $n$, we add the element $\langle a, p, n \rangle$. We
call $p$ the previous element and $n$ the next element of this new
element. The set of the WOOTR elements constitutes the characters
present in the sequence. The elements are ordered using the WOOT
algorithm~\cite{oster06data} assuming that elements with the same
previous and next elements are ordered using their character. For
instance, starting from an empty sequence, if a replica inserts $a$,
followed by $b$, while another replica inserts $c$ concurrently, we
obtain the set $\{ \langle a, \vdash, \dashv \rangle, \langle b,
\langle a, \vdash, \dashv \rangle, \dashv \rangle, \langle c, \vdash,
\dashv \rangle \}$ and the sequence is $abc$.

Since elements are not unique, they can be inserted concurrently by
two different replicas. However, they can also be added and removed
concurrently. Thus, as in any set, we need to manage these concurrent
operations. Eventual consistency can be achieved using a set CRDT such
as LWW-Set, CG-Set or OR-Set. Contrary to the original WOOT, we do not
require to keep deleted elements as tombstones since, when a remote
insertion occurs, the WOOT algorithm can find the place of the deleted
previous or next element before inserting the element itself. This is
particularly suitable for CmRDT OR-Set and C-Set that do not keep all
tombstones.

The size of WOOTR elements can be proportional to the size of the
document. Due to this size, such a sequential editing CRDT may not be
adapted to realtime collaborative text
editing~\cite{ahmednacer11evaluating}. However, we think that it can
be useful for trees, since in tree the element are distributed under
different fathers, the WOOTR elements grow more slowly.

\subsection{Non-unique position identifier}

With non-unique position identifiers, only one edge (or path) will be
present in the tree in case of concurrent insertion of an element at
the same place in the tree. A non-unique position identifier can be
used to order any variation of graph, edge or word tree.

For instance, the WOOTR identifier can be added to edge in graph or
edge trees. Such edges are ordered pair $(x, w)$ with the $x$ the
father node and $w$ a WOOTR element defined on the set of nodes. In
word tree, a path becomes $w_1 \ldots w_{n}$ a string of WOOTR
elements defined on the alphabet.

\section{Conclusion}
\label{sec:conclusion}

In this report, we have proposed several tree conflict-free replicated
data types (CRDT). These data types are based on set CRDTs. As any
CDRTs, tree CRDTs are eventually consistent and converge without
requiring any synchronization.

The unordered tree data types are constructed using a tree
representation (graph, edge or word), a set CRDT, one {\em connection
  policy} and one {\em mapping policy} (for graph and edge
tree). Every combination of choices is possible and is a tree
CRDT. Each of the choice correspond to the desired semantic to resolve
the two or three different conflicts between operations. The choice of
the set CRDT defines the semantic of the concurrent addition and
remove of an element. The choice of the connection policy defines the
semantic of concurrent remove of an element and addition of a
child. The choice of the mapping policy, if required, defines the
semantic of the concurrent additions of an element. With such a
construct we give to the application programmer the entire control of
the behavior of the tree CRDT.

The policies designed make some arbitrary choices to resolve the
conflicts. We think that arbitrary choices are mandatory to ensure
scalability in large-scale system. However, the application may have a
particular semantic on nodes or operations, or the final user may be
required to resolve the conflict. To facilitate such mechanism, we can
adapt the root policy and the zero policy. We can adapt the root
policy to place orphan elements under a special ``lost-and-found''
node and the zero policy to present to the application the conflicting
nodes and edge separately from the tree. 

The ordered tree data types are constructed upon unordered trees
CRDT. They consist in associating a totally ordered position
identifier to elements of the tree. These position identifier comes
from existing sequence editing CRDT and ensure eventual consistency
without synchronisation. Ordered trees share the same behavior than
the corresponding unordered tree except that a tree node can be add at
different positions under another node. The choice between the kinds
of position identifiers is a question of performance and adaptability
with the underlying set CRDT. Moreover, we introduce a new sequence
editing CRDT called WOOTR. This sequence editing CRDT is the first to
allow reintroduction of an element and to consider that concurrent
insertion of an element at the same position is the same operation.

All the combination presented can be used for any application that
require a tree. However, we think that some combination are more
adapted to some application context. For instance the unordered graph
trees are more adapted to applications managing a composite pattern or
a file system data structure. Indeed, in Unix-like files system, the
hard links allow to place a file or a repository in several different
repositories. One another hand, ordered word trees seems more adapted
to collaborative editing of structured
documents~\cite{martin10scalable}.

Finally, some constructs, especially trees builds on 2P-Set, are very
efficient, other variations and some policies, especially the several
policy in graph and edge trees, are quite costly in term of
computation complexity. We need to establish the actual scalability of
the constructs trough experimentation on realistic data set since the
actual computation cost depends highly on the degree of concurrency.

\bibliographystyle{abbrv}
\bibliography{theBib}

\begin{thebibliography}{10}

\bibitem{ahmednacer11evaluating}
M.~Ahmed-Nacer, C.-L. Ignat, G.~Oster, H.-G. Roh, and P.~Urso.
\newblock Evaluating crdts for real-time document editing.
\newblock In ACM, editor, {\em ACM Symposium on Document Engineering}, page 10
  pages, San Francisco, CA, USA, september 2011.

\bibitem{diestel2010graph}
R.~Diestel.
\newblock {\em Graph Theory}.
\newblock Springer-Verlag, fourth edition edition, 2010.

\bibitem{gabow86efficient}
H.~Gabow, Z.~Galil, T.~Spencer, and R.~Tarjan.
\newblock Efficient algorithms for finding minimum spanning trees in undirected
  and directed graphs.
\newblock {\em Combinatorica}, 6:109--122, 1986.
\newblock 10.1007/BF02579168.

\bibitem{gilbert02brewer}
S.~Gilbert and N.~Lynch.
\newblock Brewer's conjecture and the feasibility of consistent, available,
  partition-tolerant web services.
\newblock {\em SIGACT News}, 33:51--59, June 2002.

\bibitem{lamport78time}
L.~Lamport.
\newblock Time, clocks, and the ordering of events in a distributed system.
\newblock {\em Commun. ACM}, 21(7):558--565, 1978.

\bibitem{martin09collaborative}
S.~Martin and D.~Lugiez.
\newblock Collaborative peer to peer edition: Avoiding conflicts is better than
  solving conflicts.
\newblock In H.~Weghorn and P.~T. Isa\'{\i}as, editors, {\em IADIS AC (2)},
  pages 124--128. IADIS Press, 2009.

\bibitem{martin10scalable}
S.~Martin, P.~Urso, and S.~Weiss.
\newblock Scalable xml collaborative editing with undo.
\newblock In R.~Meersman, T.~Dillon, and P.~Herrero, editors, {\em On the Move
  to Meaningful Internet Systems: OTM 2010}, volume 6426 of {\em Lecture Notes
  in Computer Science}, pages 507--514. Springer, 2010.

\bibitem{oster06data}
G.~Oster, P.~Urso, P.~Molli, and A.~Imine.
\newblock {Data Consistency for P2P Collaborative Editing}.
\newblock In {\em Proceedings of the ACM Conference on Computer-Supported
  Cooperative Work - CSCW 2006}, pages 259--267, Banff, Alberta, Canada, nov
  2006. ACM Press.

\bibitem{preguica09commutative}
N.~M. Pregui\c{c}a, J.~M. Marqu{\`e}s, M.~Shapiro, and M.~Letia.
\newblock A commutative replicated data type for cooperative editing.
\newblock In {\em ICDCS}, pages 395--403. IEEE Computer Society, 2009.

\bibitem{roh11replicated}
H.-G. Roh, M.~Jeon, J.-S. Kim, and J.~Lee.
\newblock Replicated abstract data types: Building blocks for collaborative
  applications.
\newblock {\em Journal of Parallel and Distributed Computing}, 71(3):354 --
  368, 2011.

\bibitem{shapiro11comprehensive}
M.~Shapiro, N.~Pregui{\c c}a, C.~Baquero, and M.~Zawirski.
\newblock {A comprehensive study of Convergent and Commutative Replicated Data
  Types}.
\newblock Research Report RR-7506, INRIA, January 2011.

\bibitem{shapiro11conflictfree}
M.~Shapiro, N.~Pregui{\c c}a, C.~Baquero, and M.~Zawirski.
\newblock Conflict-free replicated data types.
\newblock In X.~D{\'e}fago, F.~Petit, and V.~Villain, editors, {\em
  Stabilization, Safety, and Security of Distributed Systems (SSS)}, volume
  6976, pages 386--400, Grenoble, France, October 2011.

\bibitem{weiss07wooki}
S.~Weiss, P.~Urso, and P.~Molli.
\newblock { Wooki: a P2P Wiki-based Collaborative Writing Tool}.
\newblock In {\em Web Information Systems Engineering}, pages 503--512, Nancy,
  France, December 2007. Springer.

\bibitem{weiss10logootundo}
S.~Weiss, P.~Urso, and P.~Molli.
\newblock Logoot-undo: Distributed collaborative editing system on p2p
  networks.
\newblock {\em IEEE Transactions on Parallel and Distributed Systems},
  21:1162--1174, 2010.

\bibitem{wuu84efficient}
G.~T. Wuu and A.~J. Bernstein.
\newblock Efficient solutions to the replicated log and dictionary problems.
\newblock In {\em Proceedings of the third annual ACM symposium on Principles
  of distributed computing}, PODC '84, pages 233--242, New York, NY, USA, 1984.
  ACM.

\end{thebibliography}

\end{document}